\documentclass[%
 reprint,
 amsmath,amssymb,
 aps,
 prb
]{revtex4-2}

\usepackage{bbding}
\usepackage{graphicx}
\usepackage{dcolumn}
\usepackage{bm}
\usepackage{amsmath}      
\usepackage{amssymb}      
\usepackage{amsfonts}     
\usepackage{amsthm}       
\usepackage{amstext}
\usepackage{amsbsy}
\usepackage{mathtools} 
\usepackage{mathtools}
\usepackage{color}
\usepackage{xcolor}       
\usepackage{wasysym}
\usepackage[pdf]{pstricks}

\usepackage[colorlinks=true,linkcolor=blue,urlcolor=blue,filecolor=magenta,citecolor=blue,breaklinks]{hyperref}
\hypersetup{
    pdftitle={Overleaf Example},
    pdfpagemode=FullScreen
}
\usepackage[capitalise]{cleveref}
\usepackage{tikz}
\usetikzlibrary{shapes.geometric, arrows}

\newtheorem{lemma}{Lemma}

\newtheorem{nres}{Numerical Result}

\begin{document}


\title{Thermal Order by Disorder in Resonating-Valence Bond States\\ on the Checkerboard Lattice}

\author{Giorgi Gogaberishvili}
\author{Nika Kurdadze}
\author{Kirill Shtengel}

\affiliation{
 Department of Physics and Astronomy, University of California Riverside, Riverside, CA 92521, USA 
}%

\date{\today}

\begin{abstract}
We derive a local spin-1/2 Hamiltonian with a resonating valence bond ground state on the checkerboard lattice. The state is characterized by the exponential decay of singlet--singlet correlations,  whereas dimer--dimer correlations decay with a power law in the corresponding Quantum Dimer model. This observation leads to a novel mechanism for thermal Order by Disorder whereby thermal decoherence suppresses destructive quantum interference between different contributions to the correlations in the ground state and results in a qualitatively different, quasi-long-range ordered mixed state.
\end{abstract}

\maketitle


\subsubsection{\label{sec:intro} Introduction}
Resonating valence bond (RVB) states introduced half a century ago \cite{Anderson1973} laid the foundation for our understanding of spin liquids. While short-ranged RVB states are conjectured to be ground states of frustrated spin Hamiltonians, tractable examples of such Hamiltonians remain in short supply~\cite{Cano2010}. In their absence, a lot of studies addressed the properties of Quantum Dimer (QD) models \cite{Rokhsar1988,Moessner2001a,Moessner2008a}. The key difference is that quantum dimers are treated as an orthogonal basis of quantum states, whereas different singlet states, which they aim to mimic, are not mutually orthogonal. This subtle distinction leads to dramatic differences in the properties of short-range RVB and QD states. Specifically, even on the square lattice where both dimer--dimer and monomer--monomer correlations decay with a power law, the spin--spin correlations decay exponentially \cite{Liang1988}. Nevertheless, singlet--singlet correlations still decay with a power law~\cite{Tang2011a}, albeit slower than dimer-dimer correlation in the corresponding Rokhsar--Kivelson (RK) model \cite{Rokhsar1988}. 
This distinction has been elucidated by Damle et al.~\cite{Damle2012}.

In the presence of geometric frustration, this behavior changes dramatically: both dimer--dimer correlations in the RK-type QD models and singlet--singlet correlations in the corresponding RVB states decay exponentially~\cite{Moessner2001a,Yang2012,Wildeboer2012}. 
From that point of view, a surprising development due to Batista and Trugman \cite{Batista2004} was the Klein-type spin-1/2 model on the checkerboard lattice, where the ground state manifold is in one-to-one correspondence with the states of the six-vertex model (at least with the periodic boundary conditions)~\cite{Nussinov2007a}. The correlations in the classical ensemble of all dimer coverings corresponding to the spin singlets in the ground state manifold would thus be described by the correlations in the six-vertex model at its isotropic point (with equal weight for all vertices) and thus decay as a power law ($\propto 1/r^2$) even though the underlying lattice is not bipartite! This behavior is a consequence of an additional restriction on the dimer configurations: in addition to one dimer per site, only one dimer per plaquette is allowed. However, two factors complicate making statements about single-singlet correlations in this model. Firstly, singlets on this lattice cannot be consistently oriented in such a way that would result in positive amplitudes for all terms in the corresponding RVB state. Secondly, no Hamiltonian that would have such an RVB state as its ground state was known; the additional Heisenberg terms analyzed in  Ref.~\cite{Nussinov2007a} were found to result in a valence-bond solid (VBS) instead.

In this manuscript, we remedy the latter concern by constructing a local SU(2) invariant spin-1/2 Hamiltonian that, combined with the original Batista--Trugman Hamiltonian, results in an equal-weight superposition of all such singlet coverings. Furthermore, this superposition can be represented as a fermionic RVB state~\cite{Yang2012} and thus efficiently simulated. We find numerically that the singlet--singlet correlations fall off exponentially in this state.

Furthermore, we identify a novel mechanism of thermal Order By Disorder, originating from destroying destructive interference between different singlet states rather than by entropic selection of a preferred sector.
Specifically, thermal decoherence results in suppressing the off-diagonal matrix elements of the density matrix which in turn drives the quantum-disordered RVB ground state with exponentially decaying singlet-singlet correlations into an incoherent collection of valence bond states, the ground states of the Klein-type Hamiltonian~\cite{Batista2004}. In this mixed state, singlet-singlet correlations exhibit the same $1/r^2$ decay as QD dimer-dimer correlations, thus manifesting a quasi-long-range order.

\subsubsection{\label{sec:RVB} RVB state on the checkerboard lattice}

Our starting point is the Klein-type Hamiltonian on the checkerboard lattice~\cite{Batista2004}:
\begin{equation}
    \label{eq:Klein}
   \hat H_\text{K} = K \sum_p \mathbf{S}_{p}^{\,2} \left(\mathbf{S}_{p}^{\,2}-2\right)\propto \mathbb{P}_2\left(\mathbf{S}_{p}\right)
\end{equation}
where $\mathbf{S}_{p}$ denotes the total spin on crossed plaquette $p$ (see Fig.~\ref{fig:numbering}). For $K>0$, the Hamiltonian projects out spin states corresponding to the total plaquette spin of $S_p=0$ or $S_p=1$, thus making them zero-energy ground states. Consequently,  any state containing at least one singlet per plaquette is a ground state. While it is possible to have two singlets in one plaquette, the number of such plaquettes in a ground state scales with the system's perimeter in the case of free boundary conditions, and they are outright impossible for the periodic boundary conditions.  This is a consequence of mapping plaquettes with one singlet onto the vertices of the six-vertex model~\cite{Nussinov2007a} -- see Fig.~\ref{fig:six_vert}; in this picture a two-singlet plaquette becomes a ``sink'' for arrows in the eight-vertex model while the corresponding sources would correspond to singlet-free plaquettes, which cost energy.

\begin{figure} [h]
    \centering
\begin{pspicture}(-0.1,0)(9,1)
\psline[linewidth=0.12,linecolor=red](0,0)(1,0)
\psline[linewidth=0.05,linecolor=gray](0,0)(1,1)
\psline[linewidth=0.05,linecolor=gray](0,0)(0,1)
\psline[linewidth=0.05,linecolor=gray](0,0)(1,0)
\psline[linewidth=0.05,linecolor=gray](1,0)(1,1)
\psline[linewidth=0.05,linecolor=gray](0,1)(1,0)
\psline[linewidth=0.05,linecolor=gray](0,1)(1,1)
\psline[linewidth=0.1,linecolor=gray]{->}(0.35,0.35)(0.45,0.45)

\psline[linewidth=0.1,linecolor=gray]{-<}(0.7,0.7)(0.6,0.6)
\psline[linewidth=0.1,linecolor=gray]{->}(0.65,0.35)(0.55,0.45)
\psline[linewidth=0.1,linecolor=gray]{-<}(0.3,0.7)(0.4,0.6)
\psdots[dotscale=1](0,0)(1,1)(0,1)(1,0)
%
\psline[linewidth=0.12,linecolor=red](2.5,0)(2.5,1)
\psline[linewidth=0.05,linecolor=gray](1.5,0)(2.5,1)
\psline[linewidth=0.05,linecolor=gray](1.5,0)(1.5,1)
\psline[linewidth=0.05,linecolor=gray](1.5,0)(2.5,0)
\psline[linewidth=0.05,linecolor=gray](2.5,0)(2.5,1)
\psline[linewidth=0.05,linecolor=gray](1.5,1)(2.5,0)
\psline[linewidth=0.05,linecolor=gray](1.5,1)(2.5,1)
\psline[linewidth=0.1,linecolor=gray]{-<}(1.8,0.3)(1.9,0.4)
\psline[linewidth=0.1,linecolor=gray]{->}(2.15,0.65)(2.05,0.55)
\psline[linewidth=0.1,linecolor=gray]{->}(2.15,0.35)(2.05,0.45)
\psline[linewidth=0.1,linecolor=gray]{-<}(1.8,0.7)(1.9,0.6)
\psdots[dotscale=1](1.5,0)(2.5,1)(1.5,1)(2.5,0)
%
\psline[linewidth=0.12,linecolor=red](3,1)(4,1)
\psline[linewidth=0.05,linecolor=gray](3,0)(4,1)
\psline[linewidth=0.05,linecolor=gray](3,0)(3,1)
\psline[linewidth=0.05,linecolor=gray](3,0)(4,0)
\psline[linewidth=0.05,linecolor=gray](4,0)(4,1)
\psline[linewidth=0.05,linecolor=gray](3,1)(4,0)
\psline[linewidth=0.05,linecolor=gray](3,1)(4,1)
\psline[linewidth=0.1,linecolor=gray]{->}(3.35,0.65)(3.45,0.55)
\psline[linewidth=0.1,linecolor=gray]{-<}(3.7,0.3)(3.6,0.4)
\psline[linewidth=0.1,linecolor=gray]{->}(3.65,0.65)(3.55,0.55)
\psline[linewidth=0.1,linecolor=gray]{-<}(3.3,0.3)(3.4,0.4)
\psdots[dotscale=1](3,0)(4,1)(3,1)(4,0)
%
\psline[linewidth=0.12,linecolor=red](4.5,0)(4.5,1)
\psline[linewidth=0.05,linecolor=gray](4.5,0)(5.5,1)
\psline[linewidth=0.05,linecolor=gray](4.5,0)(4.5,1)
\psline[linewidth=0.05,linecolor=gray](4.5,0)(5.5,0)
\psline[linewidth=0.05,linecolor=gray](5.5,0)(5.5,1)
\psline[linewidth=0.05,linecolor=gray](4.5,1)(5.5,0)
\psline[linewidth=0.05,linecolor=gray](4.5,1)(5.5,1)
\psline[linewidth=0.1,linecolor=gray]{-<}(5.2,0.3)(5.1,0.4)
\psline[linewidth=0.1,linecolor=gray]{->}(4.85,0.65)(4.95,0.55)
\psline[linewidth=0.1,linecolor=gray]{->}(4.85,0.35)(4.95,0.45)
\psline[linewidth=0.1,linecolor=gray]{-<}(5.2,0.7)(5.1,0.6)
\psdots[dotscale=1](4.5,0)(5.5,1)(4.5,1)(5.5,0)
%
\psline[linewidth=0.12,linecolor=red](6,0)(7,1)
\psline[linewidth=0.05,linecolor=gray](6,0)(7,1)
\psline[linewidth=0.05,linecolor=gray](6,0)(6,1)
\psline[linewidth=0.05,linecolor=gray](6,0)(7,0)
\psline[linewidth=0.05,linecolor=gray](7,0)(7,1)
\psline[linewidth=0.05,linecolor=gray](6,1)(7,0)
\psline[linewidth=0.05,linecolor=gray](6,1)(7,1)
\psline[linewidth=0.1,linecolor=gray]{->}(6.35,0.35)(6.45,0.45)
\psline[linewidth=0.1,linecolor=gray]{-<}(6.7,0.3)(6.6,0.4)
\psline[linewidth=0.1,linecolor=gray]{->}(6.65,0.65)(6.55,0.55)
\psline[linewidth=0.1,linecolor=gray]{-<}(6.3,0.7)(6.4,0.6)
\psdots[dotscale=1](6,0)(7,1)(6,1)(7,0)
%
\psline[linewidth=0.12,linecolor=red](8.5,0)(7.5,1)
\psline[linewidth=0.05,linecolor=gray](7.5,0)(8.5,1)
\psline[linewidth=0.05,linecolor=gray](7.5,0)(7.5,1)
\psline[linewidth=0.05,linecolor=gray](7.5,0)(8.5,0)
\psline[linewidth=0.05,linecolor=gray](8.5,0)(8.5,1)
\psline[linewidth=0.05,linecolor=gray](7.5,1)(8.5,0)
\psline[linewidth=0.05,linecolor=gray](7.5,1)(8.5,1)
\psline[linewidth=0.1,linecolor=gray]{-<}(7.8,0.3)(7.9,0.4)
\psline[linewidth=0.1,linecolor=gray]{->}(7.85,0.65)(7.95,0.55)
\psline[linewidth=0.1,linecolor=gray]{->}(8.15,0.35)(8.05,0.45)
\psline[linewidth=0.1,linecolor=gray]{-<}(8.2,0.7)(8.1,0.6)
\psdots[dotscale=1](7.5,0)(8.5,1)(7.5,1)(8.5,0)
%
\end{pspicture}
        \caption{Mapping between singlets (shown in red) and six-vertex configurations}
    \label{fig:six_vert}
\end{figure}

Consequently, the Hamiltonian in Eq.~(\ref{eq:Klein}) has exponentially many degenerate ground states; with the number of linearly independent ground states scaling asymptotically as (if not outright equal to) $(4/3)^{3N_p/4}$~\cite{Lieb1967,Baxter}, the number of different six-vertex configurations (see Supplementary Materials). 
Additional interactions will result in mixing between these states and lifting the degeneracy (fully or partially). We are interested in the scenario whereby the resulting ground state is a short-range 
RVB state -- an equal-weight superposition of all singlet coverings within one topological sector. (Topological sectors are defined by the net flux of arrows through the meridian and longitude of the torus or, equivalently, by the global slopes in the height representation.) Since no local interactions can mix different topological sectors, in what follows we will primarily focus on the flat sector, which contains the largest number of states.  

The flat topological sector is ergodic under plaquette flips~\cite{Hermele2004a}. A plaquette flip is defined in the language of the six-vertex model as the arrow reversal on an elementary square around which all four arrows circulate either clockwise or anti-clockwise; an example of such a flip is shown in Fig.~\ref{fig:numbering}. It is therefore suggestive that the effective Hamiltonian for an RVB ground state should have a Rokhsar--Kivelson (RK) form~\cite{Rokhsar1988}:
\begin{equation}
    \label{eq:RK}   
   \hat H_\text{RK}=\left(| \circlearrowright\rangle-|\circlearrowleft\rangle\right)\left(\langle \circlearrowright|-\langle\circlearrowleft|\right),
\end{equation}
where oriented circles represent such flippable plaquettes.
However, due to the non-orthogonality of different singlet configurations, $H_\text{RK}$ does not annihilate non-flippable plaquettes. To alleviate this issue, we introduce local projectors, which project out such plaquettes~\cite{Cano2010}.  One can deduce that the appropriate projector operator annihilating singlets on ``outer'' bonds $(5,6)$, $(7,8)$, $(9,10)$ and $(11,12)$ (see Fig.~\ref{fig:numbering} for the site numbering scheme) is: 
\begin{equation}
   \hat{ \mathbb{P}}_\text{\PlusCenterOpen} =\frac{1}{16}\cdot(1+\hat{P}_{5,6})(1+\hat{P}_{7,8})(1+\hat{P}_{9,10})(1+\hat{P}_{11,12})
   \label{eq:projector}
\end{equation}
  Here, $\hat{P}_{ij}=2\mathbf{S}_i\cdot \mathbf{S}_j+\frac{1}{2}$ denotes a permutation operator that exchanges spins on sites $i$ and $j$. Note that the exclusion of states with singlets on the outer bonds further excludes the possibility of singlets on bonds $(1,2)$, $(2,3)$, $(3,4)$ and $(4,1)$ for a state in the Klein subspace, i.e.\ the subspace of the Hilbert space with exactly one singlet per plaquette. Importantly, acting by the projector operator in Eq.~(\ref{eq:projector}) on a flippable plaquette generates another flippable configuration with the same orientation but outside of the Klein subspace as shown in Fig.~\ref{fig:permutation56}. However, this will not be a problem if we construct an operator that annihilates a superposition of flippable configurations with different orientations irrespective of whether they belong to the Klein subspace.

\begin{figure} [t]
    \centering
    \includegraphics[width=0.85\linewidth]{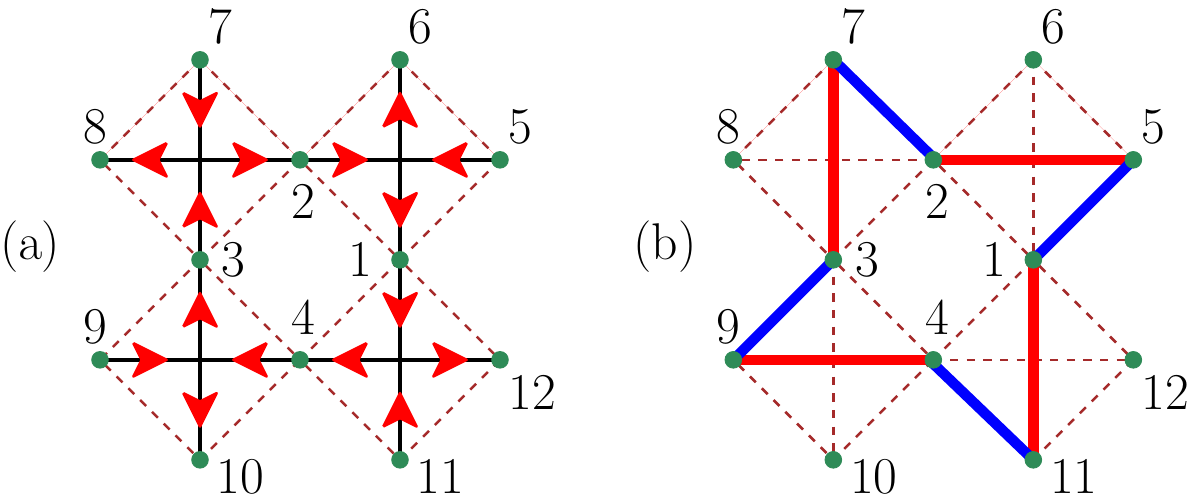}
    \caption{A flippable plaquette in the six-vertex representation (a) and the corresponding pinwheel arrangement of singlets (b) consisting of the original singlets (red) and their rearrangement after the flip (blue).}
    \label{fig:numbering}
\end{figure}

\begin{figure} [b]
    \centering
    \includegraphics[width=0.85\linewidth]{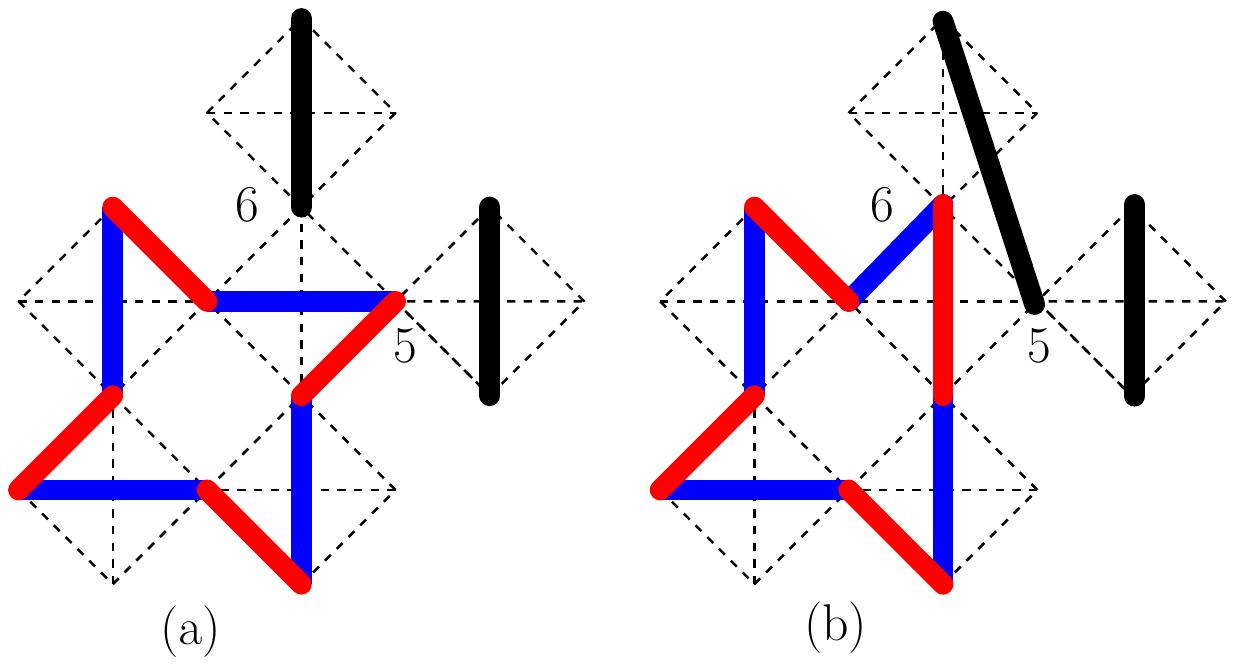}
    \caption{Clockwise (red) and anticlockwise (blue) flippable configurations of singlets (a) before, and (b) after the action of operator $\hat{P}_{56}$. Black singlets are not affected by the flips.}
    \label{fig:permutation56}
\end{figure}

With this in mind, let us focus on the cyclic permutation: $\hat{P}\equiv(1234)=\hat{P}_{3,4}\hat{P}_{2,3}\hat{P}_{1,2}$, which, when applied to a flippable plaquette, effects its flip: $\hat{P}|\circlearrowright\rangle \to |\circlearrowleft\rangle$.  The reason for an arrow instead of an equal sign here is an important subtlety originating from the fact that spin singlets are odd under the permutation of its spins and hence the aforementioned cyclic permutation may also result in an overall sign. So for each pinwheel shape involved in a plaquette flip, we define the pinwheel parity $F_p=f_{ij}f_{jk}f_{kl}f_{lm}f_{mn}f_{nr}f_{rs}f_{st}$ where $(i,j,k,l,m,n,r,s,t)$ are the eight pinwheel corners ordered in the anticlockwise direction and $f_{ij}=1 = - f_{ji}$ in the VB basis containing a singlet orientation $\lvert i\to j\rangle\equiv \left(\uparrow_i \downarrow_j - \downarrow_i \uparrow_j \right)/\sqrt{2}$. E.g., the flippable plaquette in the left pane of Fig.~\ref{fig:numbering} corresponds to the pinwheel $(1,5,2,7,3,9,4,11)$ on the right. With this definition, $\hat{P}|\circlearrowright\rangle = F_p |\circlearrowleft\rangle$. Notably, one can guarantee that \emph{all} such pinwheels have $F_p=+1$ by choosing a basis with singlet orientations such that $F=+1$ for all triangles and empty squares; an example of such orientations is shown in Fig~\ref{fig:singlet_orientation}.  
In such a basis, $\hat{P}|\circlearrowright\rangle=|\circlearrowleft\rangle$ for any flippable plaquette. Since $\hat{P}^4=I$ and consequently $\hat{P}^{-1}=\hat{P}^{3}$, we obtain:
\begin{equation}
\left(\hat{P} + \hat{P}^3 -\hat{P}^2 - \hat{I}\right)
\left(\,|\circlearrowleft\rangle + |\circlearrowright\rangle\,\right)
= 0 .
\end{equation}
\begin{figure} [h]
    \centering
    \includegraphics[width=0.55\linewidth]{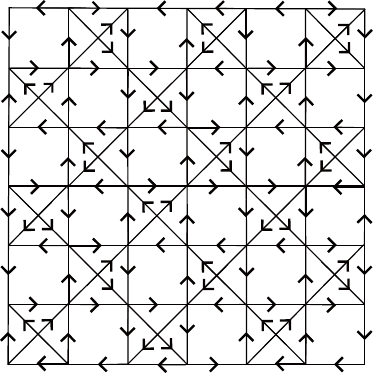}
    \caption{Singlet orientations that result in the positive parity $F_p=+1$ for each pinwheel are indicted by arrows. Note that the parity is positive for each individual triangle and open square.}
    \label{fig:singlet_orientation}
\end{figure}
A  Hermitian operator, which annihilates the superposition of flippable plaquettes and gives positive energy to other states of the Klein subspace (see Supplementary Materials), is given by:
\begin{equation}
\hat{H}' = -\left(\hat{P} + \hat{P}^3 -\hat{P}^2 - \hat{I}\right)
\end{equation}
Note that $\left[\hat{H}',\hat{ \mathbb{P}}_\text{\PlusCenterOpen}\right]=0$ since these operators do not act on the same spins: $\hat{H}'$ involves only inner spins of the flippable configuration in Fig.~\ref{fig:numbering}, whereas $\hat{ \mathbb{P}}_\text{\PlusCenterOpen}$ acts upon the outer spins. Consequently, the equal-weight superposition is indeed annihilated by the $\hat{H}'\hat{\mathbb{P}}_\text{\PlusCenterOpen}$ operator.

Expressing $\hat{P}$ in terms of the participating spin operators, we can write the resulting total local Hamiltonian as:
\begin{multline}
\label{eq:full_Hamiltonian}
    {\hat{H}}=\hat{H}_\text{K}-J \sum_\text{\PlusCenterOpen}\left[\mathbf{S}_1\cdot \mathbf{S}_2+ \mathbf{S}_2\cdot \mathbf{S}_3+\mathbf{S}_3\cdot \mathbf{S}_4+\mathbf{S}_4\cdot \mathbf{S}_1\right.
\\
+4 (\mathbf{S}_1 \cdot \mathbf{S}_2)(\mathbf{S}_3 \cdot \mathbf{S}_4)
+ 4(\mathbf{S}_2 \cdot \mathbf{S}_3)(\mathbf{S}_4 \cdot \mathbf{S}_1)\\
\left.
- 8(\mathbf{S}_1 \cdot \mathbf{S}_3)(\mathbf{S}_2 \cdot\mathbf{S}_4) -\hat{I} \right]  \hat{ \mathbb{P}}_\text{\PlusCenterOpen}
\end{multline}
with $J>0$ and the sum is take over all ``crosses'' consisting of four checkerboard plaquettes, $\hat{I}$ denotes identity matrix. While local and SU(2) symmetric, this Hamiltonian contains twelve-spin interactions.

\subsubsection{Singlet-singlet correlations in the RVB state}
\label{sec:RVB_corr}

In this section, we present our numerical study of singlet--singlet correlations in the RVB state that is the ground state of the Hamiltonian in Eq.~(\ref{eq:full_Hamiltonian}). The RVB state can be written as
\begin{equation}
|\psi_{\mathrm{RVB}}\rangle = \sum_{\mathcal{D}} |\mathcal{D}\rangle,
\qquad
|\mathcal{D}\rangle = \prod_{(ij)\in \mathcal{D}} f_{ij}\, |ij\rangle,
\end{equation}
where the sum is performed over all dimer coverings $\mathcal{D}$ in the Klein subspace. 
This state belongs to a class of so-called bosonic RVB (B-RVB) states~\cite{Yang2012}.
\footnote{Strictly speaking, VB states entering the B-RVB superposition should have an additional factor of $(-1)^{\delta_\mathcal{D}}$ where $\delta_{\mathcal{D}}$ is the number of singlets crossing one another in a given singlet covering  $\mathcal{D}$. In our case,  $\delta_{\mathcal{D}}=0$ for all states of the Klein subspace.}

To avoid the minus-sign problem, we follow the approach developed in Ref.~[\onlinecite{Yang2012}] and map the B-RVB state to the projected-BCS (P-BCS) state:
\begin{multline}
|\psi_{P\text{-BCS}}\rangle
= \sum_{\mathcal{D}}
\prod_{(ij)\in \mathcal{D}}
g_{ij}
\left(
c^{\dagger}_{i\uparrow} c^{\dagger}_{j\downarrow}
- c^{\dagger}_{i\downarrow} c^{\dagger}_{j\uparrow}
\right)
\,|0\rangle .
\end{multline}
where $g_{ij}=g_{ji}$ is the new direction-independent coefficient obtained through the mapping in Ref.~[\onlinecite{Yang2012}]. After the transformation, the singlet-singlet correlation function can be evaluated using variational Monte-Carlo \cite{Edegger2007}.

As shown in Fig.~\ref{fig:correlations},  
 The singlet--singlet
The correlation function exhibits a clear exponential decay with distance~\footnote{In the VMC simulation, the average is taken over all configurations, which might not be linearly independent in general. However, linear independence can be proven when a finite number of singlets are fixed for checkerboard lattice(see Supplementary Materials). We therefore expect that sufficiently distant singlets are effectively unaffected by the fixed subset (which can be located very far from both singlets), so that the relevant configurations remain effectively linearly independent.}. 
This is in sharp contrast to the power-law decay of dimer--dimer correlations in the statistical ensemble of all states in the Klein subspace~\cite{Nussinov2007a}. This observation has an interesting implication -- a novel order-by-disorder mechanism outlined in the next section.
\begin{figure}[thb]
    \centering
    \includegraphics[width=0.5\textwidth]{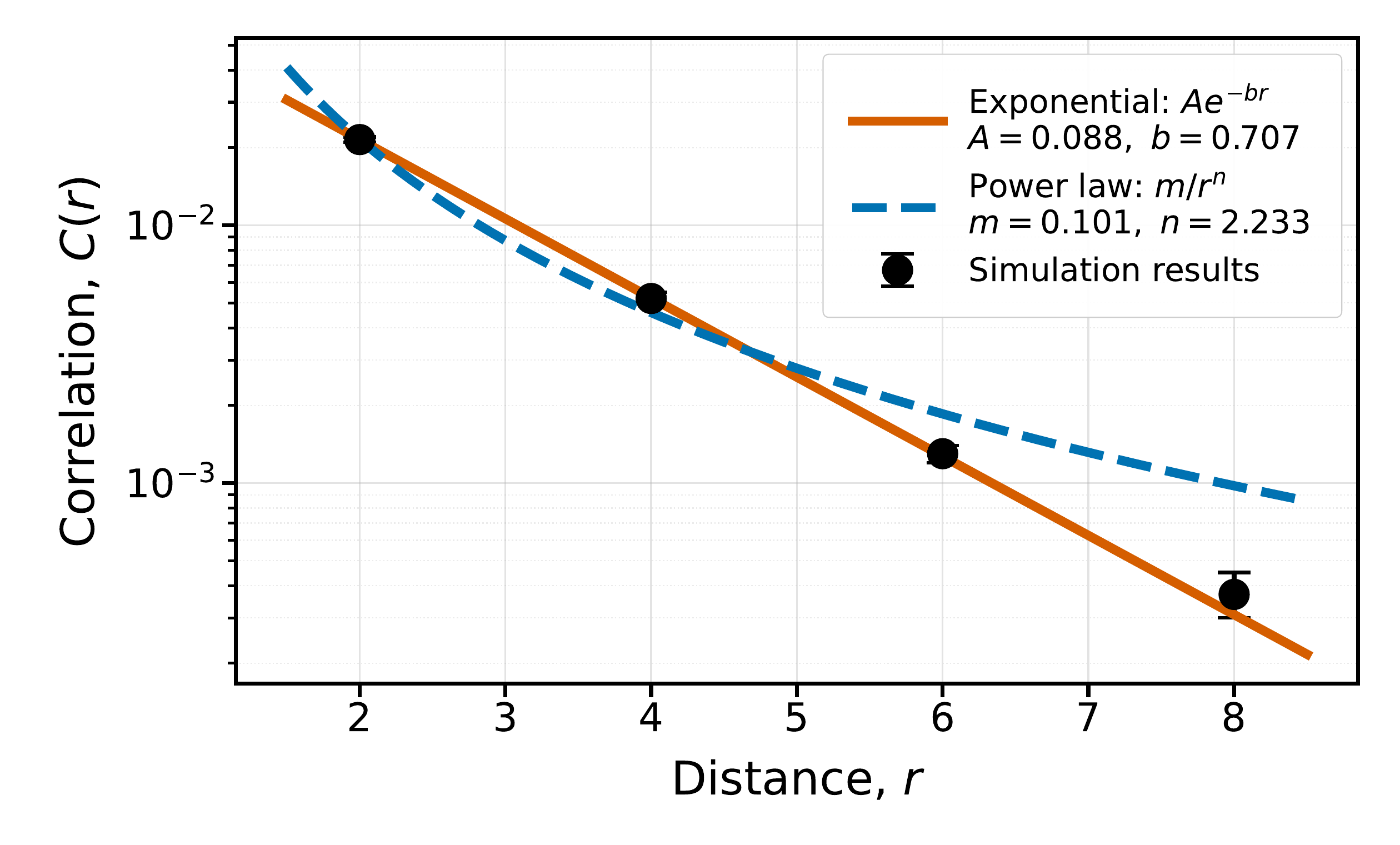}
    \caption{Singlet--singlet correlation function between two horizontal singlets in the same vertical column as a function of distance,
    for a $24 \times 24$ lattice. The data are consistent with exponential
    decay, well described by $e^{-r/1.41}$ law; the best power-law fit, $r^{-2.23}$, is also shown for comparison.}
    \label{fig:correlations}
\end{figure}

\subsubsection{Order by Disorder driven by thermal decoherence}
\label{sec:ObDO}

Again, consider the RVB state that is the ground state of the Hamiltonian in Eq.~(\ref{eq:full_Hamiltonian}).
This Hamiltonian consists of two contributions: the Klein-type Hamiltonian with coupling constant $K$, whose ground-state manifold is an ensemble of VB states in the Klein subspace, and an additional term with coupling constant $J$, which endows the VB states with quantum dynamics and leads to the RVB ground state. In general, it is assumed that $J\ll K$.

Let us now focus on the temperature regime with $K\gg T\gg J$. Intuitively, thermal fluctuations are insufficient to excite states outside the Klein subspace, so the physically relevant states are still the VB states. However, the temperature may be sufficiently high to decohere such VB states in the RVB ground state. In essence, this mechanism is quite similar to that considered in Ref.~[\onlinecite{Castelnovo2007}] in the context of the toric code. In terms of the density matrix, one would intuitively expect the vanishing of off-diagonal matrix elements and turning the pure state into a mixed state consisting of different VB states taken with the same statistical weights. There is, however, a caveat: the VB states do not form an orthogonal basis! Nevertheless, this does not present a difficulty; the density matrix is still well defined in this basis. In particular, the trace of the diagonal density matrix is equal to one if individual basis states $|\mathcal{D}_i\rangle$ are normalized. In terms of the (real, symmetric) overlap matrix $G_{ij}=\langle\mathcal{D}_i|\mathcal{D}_j\rangle$, this means $G_{ii}=1$ and $\text{Tr}\,G = \mathcal{N}$, where $\mathcal{N}$ is the dimensionality of the VB basis. Then, using orthonormal states $\lvert \phi_i\rangle = \sum_j\left(G^{-1/2}\right)_{ij}\lvert \mathcal{D}_j\rangle$ we obtain
\begin{multline}
    \text{Tr} \,\rho = \sum_j\langle\phi_j\left|\frac{1}{\mathcal{N}}\sum_i|\mathcal{D}_i\rangle\langle \mathcal{D}_i|\right| \phi_j\rangle\\
    =\frac{1}{\mathcal{N}}\sum_{k,i,l} (G^{-1})_{kl}\langle \mathcal{D}_k|\mathcal{D}_i\rangle\langle \mathcal{D}_i|\mathcal{D}_l\rangle  =\frac{1}{\mathcal{N}}\text{Tr}\,G=1.
\end{multline}

Since we are dealing with the diagonal density matrix, for any operator $\hat{O}$, 
$
\langle\hat{O}\rangle=\text{Tr}(\rho\hat{O})={\sum O_{ii}}/{\mathcal{N}},
$
where $O_{ii}=\langle{\mathcal{D}}_i|\hat{O}|\mathcal{D}_i\rangle$. In particular, this implies that the singlet--singlet correlation function in such a mixed state is given by the dimer--dimer correlation function (up to a trivial factor of $9/16$~\cite{Tang2011a}), which is, in turn, governed by the vertex--vertex correlations in the classical six-vertex model that fall off as $1/r^2$~\cite{Youngblood1980,Nussinov2007a}.

However, this conclusion is hinged on the linear independence of the VB states $|\mathcal{D}\rangle$; in particular, we have assumed the existence of $G^{-1}$. Proofs of linear independence of such states exist only for a limited set of lattices~\cite{Chayes1989c,Seidel2009,Wildeboer2011}). While we do not have a proof of such linear independence for the entire Klein subspace considered here (and worse yet, our numerical checks for small lattices resulted in a rank of $G$ slightly smaller than the size of the Klein subspace), we have proved that the singlet covering states \emph{are} linearly independent on an arbitrarily large lattice with a fixed singlet covering of a finite number (as low as 4) of adjacent checkerboard squares (see Supplementary Materials for details). These fixed singlets can be considered as a sort of ``boundary conditions'', which, in principle, leads to another small caveat: The known results for the correlations in the six-vertex model (and, by extension, for the dimer--dimer correlations relevant here) have been derived for the cases not subjected to those ``boundary conditions''. We argue that this should not present a problem for our conclusions: the effects of those fixed singlets should themselves fall off as $1/r^2$ and since they can be removed arbitrarily far from the singlets whose correlation we are interested in, the effect of these ``boundary conditions'' can be made arbitrarily small. In particular, it cannot reduce the expected power-law correlations between the singlets of interest to a shorter-range, exponential decay. Note that the six-vertex model has a line in its phase diagram where it can be mapped to free fermions and where such statements can be made more rigorous due to Wick's theorem. Although this line does not pass through the point of our interest where all vertex weights are the same, it lies in the same phase, providing us with further confidence.  

\subsubsection{Conclusions}
We have constructed a local SU(2) symmetric spin-1/2 Hamiltonian on the checkerboard lattice whose ground state is an RVB spin liquid. Although we did not address the question of the spin gap, we have presented numerical evidence of the exponential decay of singlet-singlet correlations, which is consistent with the general expectation of such decay on non-bipartite lattices. However, the nature of the constraints imposed on the VB manifold by the leading Klein-type terms in the Hamiltonian (which form the starting point of our construction) gives rise to a novel mechanism of thermal order by disorder. Specifically, as thermal decoherence destroys the phase coherence between different VB states in the ground state RVB superposition, the density matrix becomes diagonal in the VB basis. Surprisingly, this leads to a qualitative \emph{increase} in singlet--singlet correlations, whose decay becomes power-law instead of exponential. In other words, the correlations increase due to the suppression of destructive interference between different VB states participating in the RVB ground state on the frustrated checkerboard lattice.  The emergence of a quasi-long range order serves as an example of a novel route for the onset of thermal order by disorder: instead of the entropic effects of thermal fluctuations, the new mechanism relies on the suppression of destructive interference behind the quantum disorder at zero temperature.

\begin{acknowledgments}
The authors thank Hong Yao for sharing details of his earlier work. Our numerics were performed using HPCC computer clusters and data storage resources, which were funded by grants from NSF (MRI-2215705, MRI-1429826) and NIH (1S10OD016290-01A1).
\end{acknowledgments}

\newpage
\onecolumngrid
\newpage
\section*{Supplementary Materials}
\setcounter{subsection}{0} 
\subsection{Proof of the RK Parent Hamiltonian Structure}

Consider a $4\times 4$ cyclic exchange operator $\hat{P}$ satisfying
\begin{equation}
\hat{P}^4 = I.
\end{equation}

Its eigenvalues are the quartic roots of unity
\begin{equation}
\lambda \in \{1,\,i,\,-1,\,-i\},
\end{equation}
with the corresponding eigenvectors
\begin{align}
(1,1,1,1), \quad
(1,-i,-1,i), \quad
(1,-1,1,-1), \quad
(1,i,-1,-i).
\end{align}

Define the operator
\begin{equation}
\hat{\mathcal{H}} = \left(I - \hat{P} + \hat{P}^2 - \hat{P}^{-1}\right).
\end{equation}

Since $\hat{P}^{-1}=\hat{P}^3$, the eigenvalues of $\hat{\mathcal{H}}$ obey
\begin{equation}
\mathcal{H}(\lambda)
=
\left(1 - \lambda + \lambda^2 - \lambda^{-1}\right).
\end{equation}

Substituting $\lambda = 1,i,-1,-i$ yields
\begin{equation}
{\mathcal{H}}(\lambda) = \{0,\,0,\,2,\,0\}.
\end{equation}

Hence, $\hat{\mathcal{H}}$ is a positive semi-definite operator with a three-dimensional kernel.

\medskip
When this operator acts on spin indices of $\mathbf{S}_1$, $\mathbf{S}_2$, $\mathbf{S}_3$ and $\mathbf{S}_4$ in Fig.~\ref{fig:numbering} of the main text, we should additionally require that the resulting state remains in the Klein subspace.
This restricts the allowed basis vectors to
\begin{equation}
(1,0,0,0),
\qquad
(0,1,0,0).
\end{equation}
which are associated with the two states of a flippable plaquette.

The most general vector in the kernel of $\hat{\mathcal{H}}$ can be expressed as a linear combination of the three zero-eigenvalue eigenvectors,
\begin{equation}
\psi =
\left(
c_1 + c_2 + c_3,\;
c_1 - i c_2 + i c_3,\;
c_1 - c_2 - c_3,\;
c_1 + i c_2 - i c_3
\right).
\end{equation}

Imposing the physical constraint that only the first two components are nonzero gives
\begin{align}
c_1 - c_2 - c_3 &= 0, \\
c_1 + i c_2 - i c_3 &= 0.
\end{align}

Solving these equations yields
\begin{equation}
\psi = (2c,\,2c,\,0,\,0).
\end{equation}

Therefore, within the projected Hilbert space (the Klein subspace), the unique ground state is proportional to
\begin{equation}
(1,0,0,0) + (0,1,0,0).
\end{equation}
\newpage

\newpage
\subsection{Linear independence of the checkerboard Valence Bond states with some fixed singlets}

In this section, we prove that fixing singlets on a finite number of plaquettes (henceforth referred to as a patch) guarantees linear independence of the  singlet coverings of the entire lattice which contain those fixed singlets.  While the smallest such patch consists of 4 plaquettes connected by their corners and forming a $4\times 1$ strip, we begin by showing the proof for a \(3 \times 3\) patch shown in Fig.~\ref{fig:fixedsinglets}, as it is simpler. The \(4 \times 1\) case follows the same steps; however, it requires one additional step to ensure that the ``patch expansion'' procedure is always possible, so it is discussed at the end. Before proceeding with the proof, let us clarify the terminology used in this section. The patches considered here are formed by connecting corner-sharing crossed plaquettes. They are referred to as ``rectangular'' whenever the outline of the overall shape of a patch is a rectangle (e.g., as seen in Fig.~\ref{fig:fixedsinglets}) even though the plaquettes forming it are oriented at $45^\circ$ to the sides of the rectangle containing the patch.

The method of the proof is a slight modification of that in \cite{Seidel2009} and is based on construction projector operators that project onto a given singlet covering and project the rest of them out. When describing the singlet configurations, we will use the language of the six-vertex model owing to the one-to-one correspondence between six-vertex and dimer configurations outlined in the main text. 

\begin{figure} [h]
    \centering
    \includegraphics[width=0.5\linewidth]{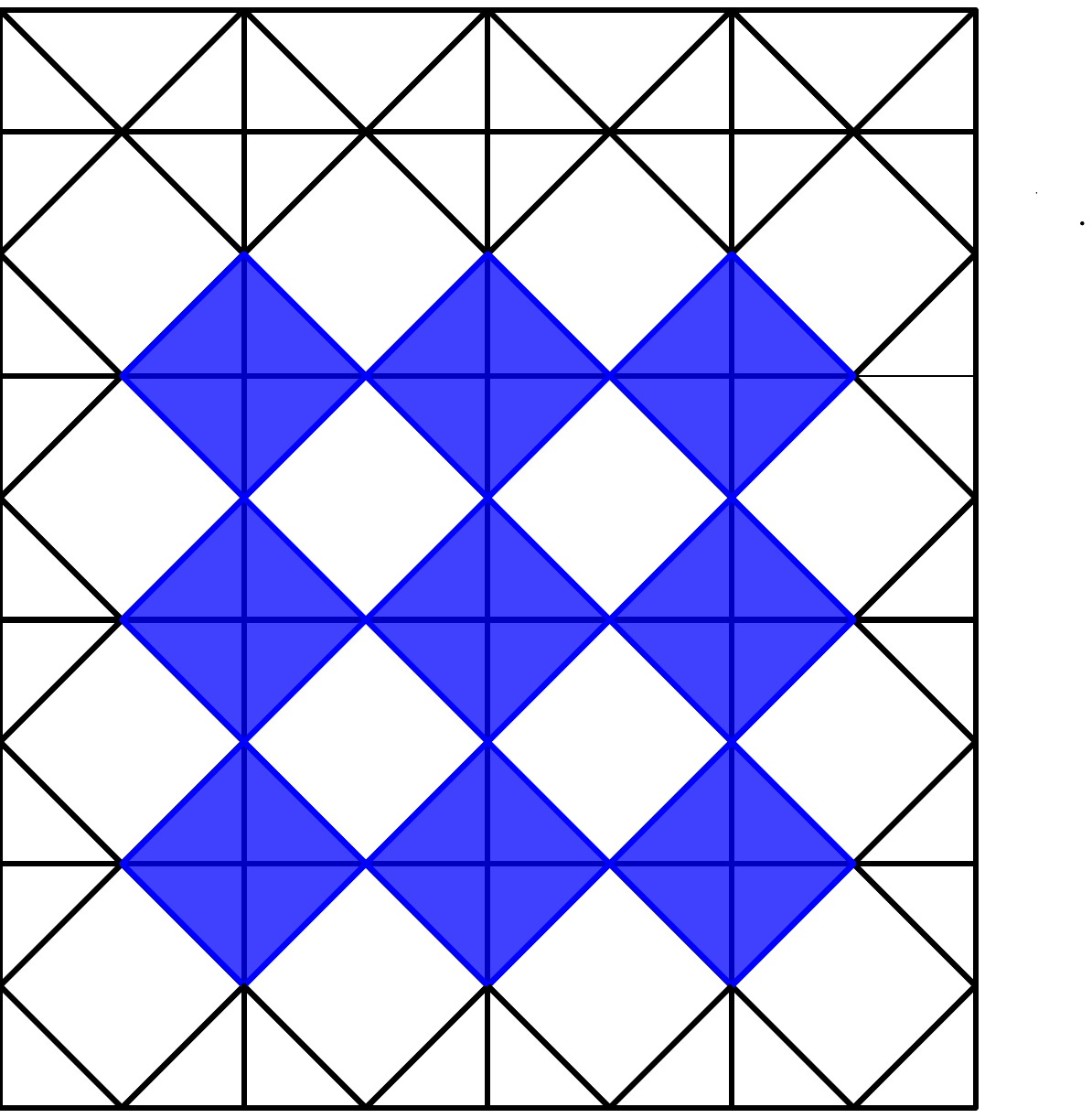}
    \caption{Plaquettes in which singlet configurations are fixed.}
    \label{fig:fixedsinglets}
\end{figure}

\begin{lemma} 
\label{lemma1}
For any rectangular patch formed of crossed plaquettes there exists at least one side on which the number of
outgoing arrows of the six-vertex model is greater than or equal to the number of incoming arrows. \\
\end{lemma}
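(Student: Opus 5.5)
The argument is a flux-conservation count for the six-vertex arrows on the boundary of the patch. Every crossed plaquette in the Klein subspace carries exactly one singlet, so the corresponding vertex has two incoming and two outgoing arrows (Fig.~\ref{fig:six_vert}). The arrows live on the bonds linking neighboring vertices, that is, on the shared corners of corner-sharing plaquettes.

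I would sum, over all plaquettes in the rectangular patch, the number of outgoing minus incoming arrows at each vertex. This sum is zero, since each vertex contributes $2-2=0$. Every arrow joining two vertices inside the patch is outgoing for one and incoming for the other, so these internal arrows cancel in pairs. What survives is the net outward flux through the boundary arrows, those joining a patch plaquette to a plaquette outside the patch. Therefore
\[
\sum_{\text{sides}} \left(N_{\text{out}} - N_{\text{in}}\right) = 0 .
\]

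The next step is to assign each boundary arrow to one of the four sides of the rectangular outline. Because the plaquettes are rotated by $45^\circ$ relative to the rectangle, every boundary corner of a patch plaquette faces outward through exactly one of the four sides. This includes the extreme corner plaquettes of the rectangle: each of their exposed lattice corners points in a single diagonal direction, and I would attribute it to the side it crosses. With this assignment, the total above splits as a sum over the four sides, $\sum_{s=1}^{4}\Delta_s=0$ with $\Delta_s=N^{(s)}_{\text{out}}-N^{(s)}_{\text{in}}$.

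Four integers that sum to zero cannot all be negative, so at least one $\Delta_s\ge 0$, which is the claim. The same reasoning shows more generally that if some side has strictly negative net flux, another side must have strictly positive net flux. That sharper form is what I expect the later ``patch expansion'' step to use.

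The one point needing care is the bookkeeping of the boundary. I would check, for the rectangular shapes considered (including the $4\times 1$ strip), that each external corner is counted on exactly one side. For the patch in Fig.~\ref{fig:fixedsinglets} this can be done by inspection. In general it follows from the fact that each plaquette corner points in one of four diagonal directions, and each direction corresponds to one side of the $45^\circ$-rotated outline.

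It also matters that the patch plaquettes satisfy the two-in/two-out rule. This holds because every plaquette of a Klein-subspace state has exactly one singlet, as noted in the main text for periodic boundary conditions, and the fixed singlets of the patch are chosen in that subspace.
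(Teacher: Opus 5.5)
Your proposal is correct and is essentially the paper's argument. The paper argues by contradiction that if every side had more incoming than outgoing arrows, net flux would enter the closed boundary, which is impossible because every two-in/two-out vertex is divergence-free. Your direct version (the four side fluxes $\Delta_s$ sum to zero, so they cannot all be negative) is the same count, and your assignment of each exposed corner to exactly one side spells out bookkeeping the paper leaves implicit.
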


\begin{proof} Assume the opposite. Then the number of incoming arrows
is strictly greater than the number of outgoing arrows for each side of the rectangle. Hence,
more arrows would enter the closed boundary of the rectangle than leave it, which is impossible for the six-vertex model since all of its vertices are divergence-free.
\end{proof} 

\begin{nres}
\label{nres1}
Consider a $3\times 1$ patch made of three plaquettes as shown in Fig.~\ref{fig:Boundary}. If the boundary condition defined by arrows through the bottom dashed rectangle is fixed so that at least two arrows are incoming, then the corresponding configurations are linearly independent.
\end{nres}
\begin{figure} [h]
    \centering
    \includegraphics[width=0.35\linewidth]{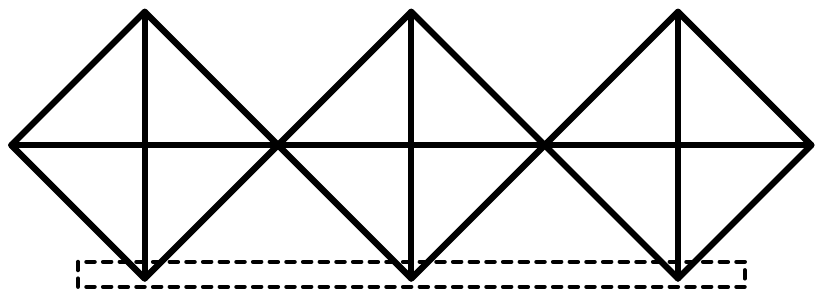}
    \caption{A strip consisting of three plaquettes. The bottom boundary is indicated by the dashed rectangle.}
    \label{fig:Boundary}
\end{figure}

\begin{nres}
\label{nres2}
If the directions of \emph{two} arrows are fixed within a single plaquette, then the singlet configurations on that plaquette that are consistent with the two fixed arrows are linearly independent.
\end{nres}

\begin{proof}[\textit{Proof of the main claim for a \(3 \times 3\) patch:}]
Fix an arbitrary singlet configuration on a \(3 \times 3\) patch. Our goal is to construct a projector onto a state with an arbitrary singlet configuration on the remaining plaquettes consistent with the fixed configuration on the \(3 \times 3\) patch, which in turn guarantees linear independence of all such states~\cite{Seidel2009}.

It follows from Lemma~\ref{lemma1} that at least one side of the \(3\times3\) patch has at least as many outgoing
arrows as incoming arrows. Without loss of generality, let this side to be the top side in  Fig.~\ref{fig:fixedsinglets}. Then the three plaquettes adjacent to this side, shown in red in Fig.~\ref{fig:placeholder}, have at least two incoming arrows entering its bottom boundary. Therefore, Numerical Result~\ref{nres1} ensures that the corresponding configurations are linearly independent, and a local projector can be constructed for these three plaquettes~\cite{Seidel2009}.

\begin{figure} [h]
    \centering
    \includegraphics[width=0.5\linewidth]{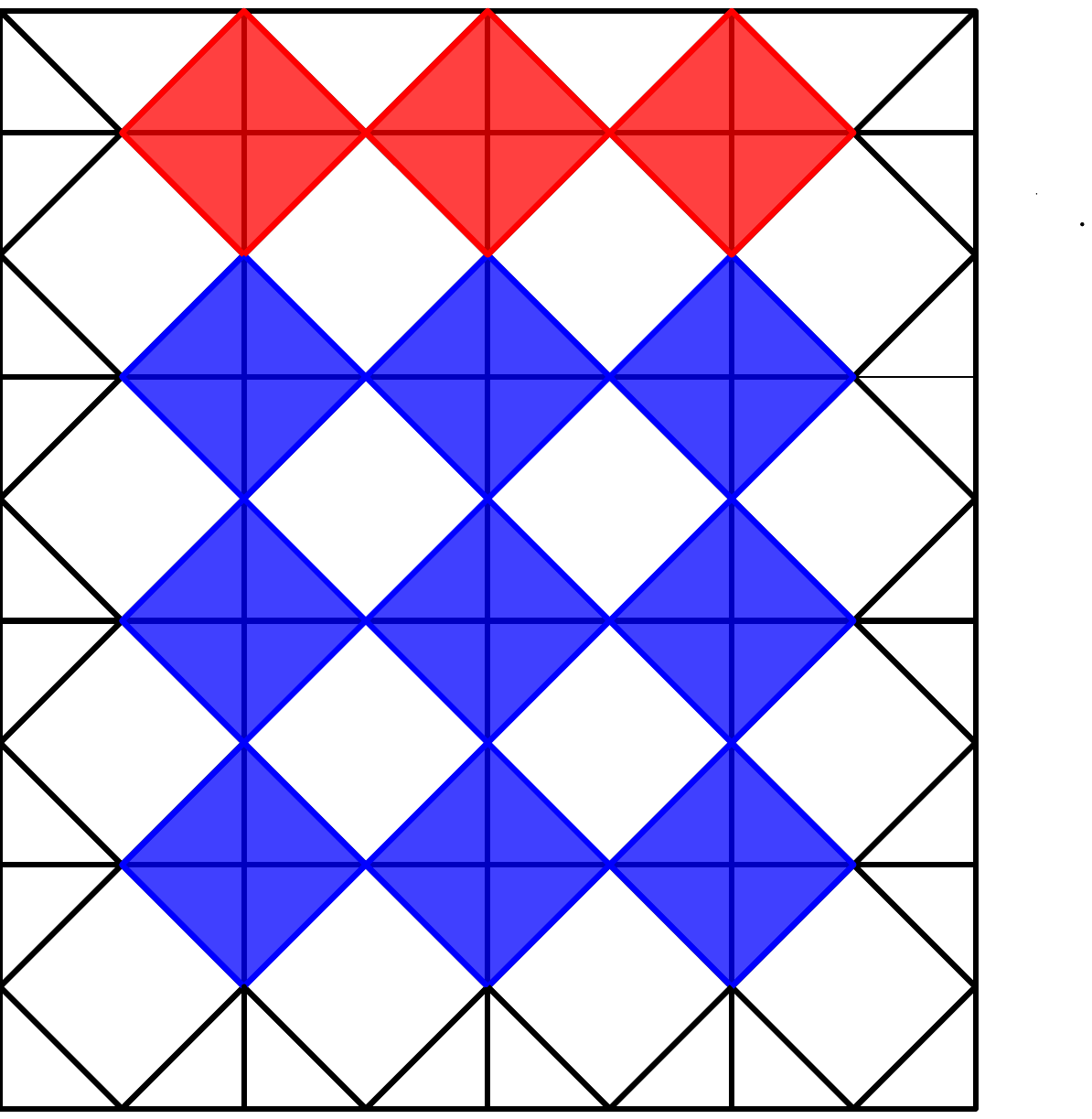}
    \caption{Expansion procedure for the $3\times3$  patch (blue-shaded plaquettes). If the top boundary of the original patch has more outgoing arrows than incoming arrows, the adjacent red plaquettes have more incoming arrows than outgoing arrows and thus can be incorporated into an expanded patch.}
    \label{fig:placeholder}
\end{figure}

Now repeat this construction by expanding the patch one side at a time and constructing the corresponding projector operator at each step. Suppose that the current rectangular patch has a side of length $L>3$ with at least as many outgoing arrows as incoming arrows  (see Fig.~\ref{fig:expand}). There must exist a set of three neighboring boundary plaquettes with at least two outgoing arrows (For demonstration, without loss of generality, let them be blue-shaded plaquettes in Fig.~\ref{fig:expand}). Now consider three new plaquettes immediately adjacent to them  (red-shaded plaquettes in Fig.~\ref{fig:expand}). These red-shaded plaquettes have at least two arrows incoming from below. It  then follows from Numerical Result~\ref{nres1} that they are linearly independent, and thus constructing a local projector for this additional three-plaquette strip is possible.

\begin{figure} [h]
    \centering
    \includegraphics[width=0.45\linewidth]{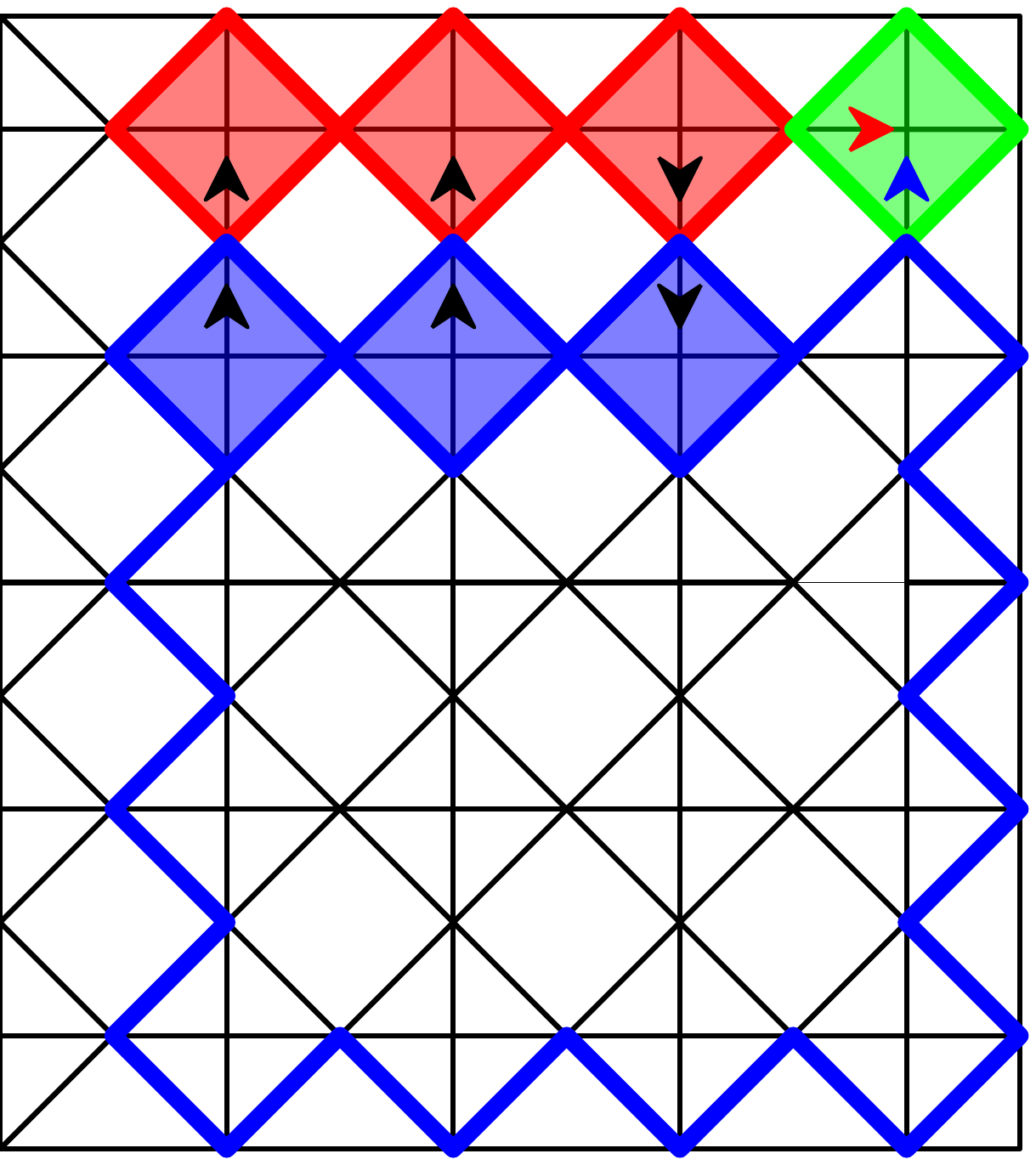}
    \caption{Shaded blue plaquettes are the starting point of the patch expansion: they provide the necessary boundary condition which allows for the addition of red-shaded plaquettes. The arrows represent just one example of such a boundary condition; the  only constraint on these arrows is that there are more arrows oriented from  blue to red plaquettes than than those with the opposite orientation. The last step is adding a green-shaded plaquette  to restore the expanded patch to a rectangle form. In principle, there can be any number of green-shaded plaquettes that allow for the sideway expansion of the original red-shaded ``seed'', one at a time and in both directions. Again, the arrows shown here are just an example; the key is that they are fixed by the patch below and the growing expansion seed on the left.}
    \label{fig:expand}
\end{figure}

Having applied the projector to fix the state of the strip, we can use it as a ``seed'' to grow the strip horizontally (generically, in both directions) until the side of the newly expanded rectangular patch is complete. The procedure is as follows: the plaquettes horizontally adjacent to the seed (e.g. the green-shaded plaquette in Fig.~\ref{fig:expand} ) have two fixed arrows - one from the already projected blue patch below (one possible arrow orientation is shown as a blue arrow in Fig.~\ref{fig:expand}) and the other from the newly projected seed (a red arrow incident from the red strip in Fig.~\ref{fig:expand}). Therefore, from Numerical Result~\ref{nres2}, all possible configurations on this green plaquette are also linearly independent, and the corresponding projector operators can be constructed and applied. If necessary, repeat this step for subsequent horizontally adjacent plaquettes until the side of the expanded rectangular patch is complete (in Fig.~\ref{fig:expand} it is not necessary, since fixing the state of just one green plaquette completes the side of the expanded rectangle).

Thus, the patch can always be expanded until the entire lattice can be covered by repeating this procedure. Here we assume periodic boundary conditions in order to ensure that the procedure cannot get stuck if a ``good'' side of the patch hits the boundary while the ``bad'' sides are not amenable to the aforementioned expansion procedure. The final projector operator is then given by the product of the local projector operators applied in the order determined by the patch-expansion procedure. Finally, as mentioned above and outlined in Ref.~\cite{Seidel2009}, the existence of such a projector operator for each configuration automatically implies linear independence of the configurations. 
\end{proof}

\begin{proof}[Proof for a \(4 \times 1\)  patch:]
The logic of the proof is the same if the starting point is a \(4 \times 1\) patch; however, one must show that the patch can always be expanded. 

Initially, the boundary points $(A)$ and $(B)$ may have incoming and one outgoing (one-in/one-out), two-in, or two-out arrows (shown in Fig.~\ref{fig:4X1}). It is straightforward to verify that, in every case,  one of the two dashed horizontal boundaries in Fig.~\ref{fig:4X1} contains at least as many outgoing arrows as incoming ones. The least obvious case is the two-out configuration; however, even in this case, one of the two dashed horizontal boundaries must have a one-in/three-out arrow configuration, while the other boundary must have a two-in/two-out configuration, so that the closed boundary has an equal number of incoming and outgoing arrows. Consequently, the patch can be expanded in the vertical direction across the two-in/two-out boundary, resulting in a \(4\times2\) patch (see Fig.~\ref{fig:4X2}). 
\begin{figure} [h]
    \centering
    \includegraphics[width=0.5\linewidth]{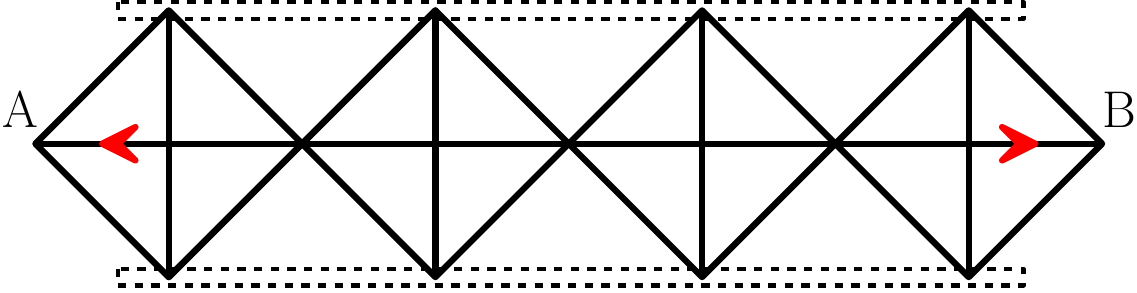}
    \caption{A \(4\times1\)  rectangle with sites $(A)$ and $(B)$ corresponding to a horizontal boundary with two outgoing sideway arrows.}
    \label{fig:4X1}
\end{figure}
\\ 
\begin{figure} [h]
    \centering
    \includegraphics[width=0.5\linewidth]{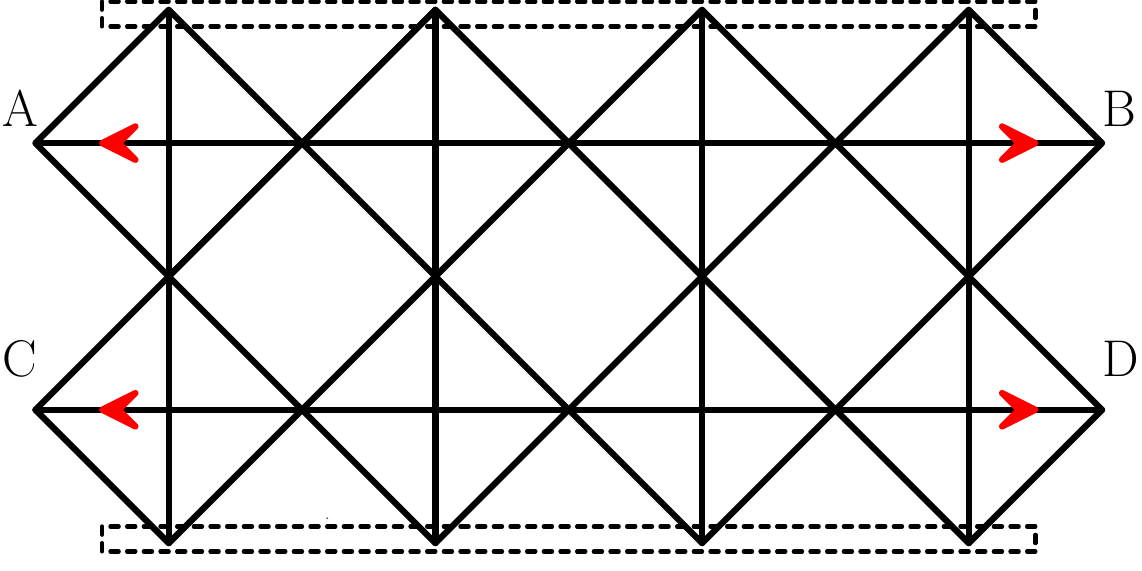}
    \caption{\(4\times2\) rectangle with four outgoing arrows on the vertical boundary.}
    \label{fig:4X2}
\end{figure}

Before continuing, we present one additional numerical result:
\begin{nres}
\label{nres3}
Consider a two-plaquette strip such as the one shown in Fig.~\ref{fig:2-plaquettes}. If the boundary condition at the dashed boundary is fixed so that exactly two arrows are incoming, then the corresponding configurations are linearly independent.
\end{nres}
\begin{figure} [h]
    \centering
    \includegraphics[width=0.15\linewidth]{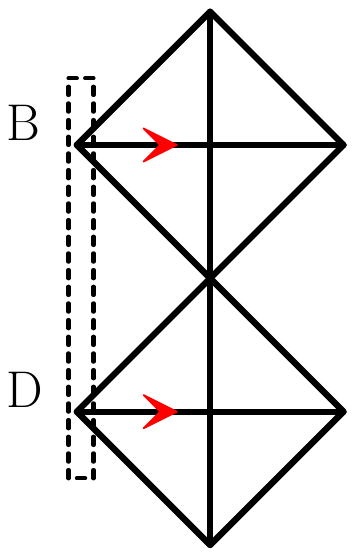}
    \caption{A two-plaquette strip with two incoming arrows at the dashed vertical $(B,D)$ boundary.}
    \label{fig:2-plaquettes}
\end{figure}

Now consider a \(4\times2\) patch. If neither of the vertical boundaries (sides $(A,C)$ and $(B,D)$ in Fig.~\ref{fig:4X2}) has two outgoing arrows, then at least one of the two horizontal dashed boundaries contains at least as many outgoing arrows as incoming ones. Therefore, the patch can be expanded in the vertical direction, resulting in a new rectangular patch with the length of the sides of at least 3 (the expansion procedure is the same as the aforementioned procedure for a \(3\times3\) patch). Consequently, from that point, the proof given for the previous \(3\times3\) case applies.

If, on the other hand, at least one vertical boundary has two outgoing arrows, then Numerical Result~\ref{nres3} ensures that the patch can be expanded in the horizontal direction.  We continue horizontal expansion until neither vertical boundary has two outgoing arrows. At that point, vertical expansion becomes possible, reducing the problem to the \(3\times3\) case proved previously. 

Alternatively, if after every step of the horizontal expansion results in two outgoing arrows at at least one of the vertical boundaries, then one must rely on the periodic boundary conditions in the horizontal direction. Upon completing an $L_x\times 2$ horizontal belt (with $L_x$ being the horizontal circumference), we can employ Lemma~\ref{lemma1} which guarantees that the expansion in the vertical direction is now possible. 
\end{proof}


\end{document}